\newtheorem{theorem}{Theorem}[]
\newtheorem{lemma}[theorem]{Lemma}
\newtheorem{proposition}[theorem]{Proposition}
\theoremstyle{definition}
\newtheorem{example}[theorem]{Example}
\newtheorem{definition}[theorem]{Definition}
\numberwithin{equation}{section}
\begin{document}

\title[A Note on The Enumeration of Euclidean Self-Dual Skew-Cyclic Codes]
 {A Note on The Enumeration of Euclidean Self-Dual Skew-Cyclic Codes over Finite Fields}

%----------Author 1
\author[Irwansyah]{Irwansyah}

\address{Mathematics Department\\
Faculty of Mathematics and Natural Sciences\\
Universitas Mataram, Mataram\\
INDONESIA}

\email{irw@unram.ac.id}

%\thanks{This work was completed with the support of Riset Unggulan Perguruan Tinggi Dikti 2016.}

\author[I. M. Alamsyah, A. Muchlis, A. Barra]{Intan Muchtadi-Alamsyah, Ahmad Muchlis, Aleams Barra}

\address{Algebra Research Group\\
Faculty of Mathematics and Natural Sciences\\
Institut Teknologi Bandung, Jl. Ganesha 10, Bandung, 40132\\
INDONESIA}

\email{ntan,muchlis,barra@math.itb.ac.id}

%----------Author 2
\author[D. Suprijanto]{Djoko Suprijanto}
\address{Combinatorial Mathematics Research Group\\
Faculty of Mathematics and Natural Sciences\\
Institut Teknologi Bandung\\
Jl. Ganesha 10, Bandung, 40132\\
INDONESIA}
\email{djoko@math.itb.ac.id}
%----------classification, keywords, date
\subjclass{11T71}

\keywords{Skew-cyclic codes, Euclidean self-dual codes, Enumeration.}

\date{}
%----------additions
\dedicatory{}
%%% ----------------------------------------------------------------------

\begin{abstract}
In this paper,
we give the enumeration formulas for Euclidean self-dual skew-cyclic codes over
finite fields when $(n,|\theta|)=1$ and for some cases when $(n,|\theta|)>1,$
where $n$ is the length of the code and $|\theta|$ is the order of automorphism $\theta.$ \\[0.25cm]
\end{abstract}

%%% ----------------------------------------------------------------------
\maketitle
%%% ----------------------------------------------------------------------
%\tableofcontents
\section{Introduction}
Let $\mathbb{F}_q$ be the finite field of cardinality $q,$ where $q$ is a prime power.
A linear code over $\mathbb{F}_q$ with length $n$ is a linear subspace of
the $\mathbb{F}_q$-vector space $\mathbb{F}_q^n.$ The two classes of interesting linear
codes are cyclic codes and self-dual codes.
The reason is, among other things, because these classes of codes have produced many examples of optimal codes.
Therefore, some authors also studied the combination of these two classes of codes,
namely self-dual cyclic codes \cite{jia} and the generalization of cyclic codes
such as quasi-cyclic codes and skew-cyclic codes\cite{ling}.

Skew-cyclic codes or $\theta$-cyclic codes
over finite fields are a generalization of cyclic codes over finite fields.
Here, $\theta$ is an automorphism in the corresponding finite fields.
This class of codes has many interesting properties such as it can be viewed
as a left ideal or left submodule over a skew-polynomial ring, {\it etc}, see \cite{geis}.
Moreover, these codes also give $[38,18,11]$ code,
an Euclidean self-dual code over $\mathbb{F}_4,$
which improves the bound for optimal self-dual codes of length $36$ with respect to Hamming distance \cite{delphine}.

Recently, the study of special class of skew-codes over finite fields {\it i.e.}
Euclidean self-dual skew-codes are conducted regarding their existence and enumeration.
Boucher \cite{boucher15} showed that, when $q\equiv 1\mod 4,$ there always exists
a self-dual $\theta$-code in any dimension and that self-dual $\theta$-codes of
a given dimension are either all $\theta$-cyclic or all $\theta$-negacyclic.
Also, It is not exists when $q\equiv 3\mod 4.$ However,
the enumeration of this class of codes not yet been done completely.
So far, the enumeration was done for self-dual skew-codes over $\mathbb{F}_4$
with length $n=2^s$ \cite{delphine13} and for self-dual skew-codes
over $\mathbb{F}_{p^2}$ \cite{boucher14}.

In this paper, we study the enumeration of self-dual skew-cyclic or $\theta$-cyclic
codes when $(n,|\theta|)=1$ and for some cases when $(n,|\theta|)=d>1,$
where $|\theta|$ is the order of the automorphism $\theta.$

\section{The Enumeration Formulas}
Let us recall some definition which can be found in \cite{boucher14,geis,jia,ling}.
Let $T$ be a shift operator on $\mathbb{F}_q^n$ and $\theta$ is an automorphism in $\mathbb{F}_q.$

\begin{definition}
Let $C$ be a linear code of length $n$ over $\mathbb{F}_q.$
\begin{itemize}
\item[(1)]  If
\[T^l(c)=(c_{n-l},c_{n-l+1},\dots,c_{n-l-1})\in C\]
whenever $c=(c_0,c_1,\dots,c_{n-1})\in C,$ then $C$ is a quasi-cyclic code with index $l.$

\item[(2)] If
\[T_\theta(c)=(\theta(c_{n-1}),\theta(c_{0}),\dots,\theta(c_{n-2}))\in C\]
whenever $c=(c_0,c_1,\dots,c_{n-1})\in C,$ then $C$ is a $\theta$-cyclic code.

\item[(3)] If $\theta$ in (2) above is an identity map, then $C$ is a cyclic code.
\end{itemize}
\end{definition}

For any $c=(c_0,\dots,c_{n-1})$ and $c'=(c_0',\dots,c_{n-1}')$ in $\mathbb{F}_q,$
we define the Euclidean product between $c$ and $c'$ as
\[
[c,c'] = \sum_{i=0}^{n-1} c_ic_i'.
\]
Let $C$ be a code of length $n$ over $\mathbb{F}_q$ and $C^\bot=\{c'\in\mathbb{F}_q^n:~[c,c']=0,\forall c\in C\}.$
Then we can give the following definition.

\begin{definition}
A linear code $C$ is called \emph{Euclidean self-dual} if $C=C^\bot.$
\end{definition}

Now, before we do the enumeration, we need to proof the following proposition.

\begin{proposition}
If $C$ is a $\theta$-cyclic code over $\mathbb{F}_q$ of length $n,$
for some automorphism $\theta$ in $\mathbb{F}_q,$ then $C$ is either a cyclic code or a quasi-cyclic code.
\label{equiv}
\end{proposition}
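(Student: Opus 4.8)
The crux of this statement is that $\theta$, being an automorphism of the finite field $\mathbb{F}_q$ (with $q=p^m$), lies in the cyclic Galois group of $\mathbb{F}_q$ over its prime field and hence has finite order; write $r=|\theta|$. The plan is to show that any $\theta$-cyclic code $C$ is automatically stable under the $r$-fold ordinary cyclic shift $T^{r}$, which by Definition is exactly the condition that $C$ be quasi-cyclic of index $r$, and then to dispose of the degenerate case $\theta=\mathrm{id}$ as the cyclic case.

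First I would factor the skew shift. Let $\Theta$ denote the coordinatewise map $\Theta(c_0,\dots,c_{n-1})=(\theta(c_0),\dots,\theta(c_{n-1}))$ and let $T$ be the ordinary cyclic shift. Comparing components, for each index $i$ one checks that $(\Theta\circ T)(c)_i=\theta(c_{i-1})=T_\theta(c)_i$ (indices modulo $n$), so $T_\theta=\Theta\circ T$. Since $\Theta$ only transforms entries while $T$ only permutes positions, the two maps commute, and therefore
\[
T_\theta^{\,r}=(\Theta\circ T)^{r}=\Theta^{\,r}\circ T^{\,r}=T^{\,r},
\]
the last equality holding because $\theta^{\,r}=\mathrm{id}$ forces $\Theta^{\,r}$ to be the identity. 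This identity is the heart of the argument.

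Next I would invoke $\theta$-cyclicity. Because $T_\theta=\Theta\circ T$ is a composition of bijections, it is a bijection of $\mathbb{F}_q^{\,n}$; as it sends $C$ into $C$, it restricts to an injection of the finite set $C$ into itself, whence $T_\theta(C)=C$ and consequently $T_\theta^{\,r}(C)=C$. Combining with the displayed identity gives $T^{\,r}(C)=C$, so $C$ is closed under $T^{\,r}$, i.e. $C$ is quasi-cyclic of index $r$. A short case split then finishes the proof: if $\theta=\mathrm{id}$ then $r=1$, $T_\theta=T$, and $C$ is cyclic; otherwise $r>1$ and $C$ is quasi-cyclic. I do not anticipate a serious obstacle here, as the proposition is essentially structural; the only points demanding care are the commutation-and-order computation above and the index bookkeeping modulo $n$ in verifying $T_\theta=\Theta\circ T$.
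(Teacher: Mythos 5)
Your proof is correct for the proposition as literally stated, but it takes a different route from the paper and lands on a weaker conclusion. The paper splits into cases according to $s=\gcd(n,|\theta|)$ and uses B\'ezout's identity to produce an exponent $p_1$ with $p_1|\theta|\equiv s \pmod n$, so that $T_\theta^{p_1|\theta|}=T^{s}$ on $C$; this shows $C$ is quasi-cyclic of index exactly $\gcd(n,|\theta|)$, and in particular genuinely \emph{cyclic} whenever $\gcd(n,|\theta|)=1$. Your factorization $T_\theta=\Theta\circ T$ with $\Theta$, $T$ commuting is clean and correct (and the bijectivity/finiteness step is not even needed: $T_\theta(C)\subseteq C$ already iterates to $T_\theta^{\,r}(C)\subseteq C$), but taking the exponent $r=|\theta|$ only yields quasi-cyclicity of index $|\theta|$, and you obtain the cyclic alternative only in the degenerate case $\theta=\mathrm{id}$. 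That suffices for the disjunction ``cyclic or quasi-cyclic,'' but the paper actually needs the sharper statement later: Section 2.1 rests on the claim that $\gcd(n,|\theta|)=1$ forces a $\theta$-cyclic code to be cyclic, which your argument as written does not deliver. The gap is easy to close within your framework: since $T^{n}=\mathrm{id}$ on $\mathbb{F}_q^{\,n}$, the powers of $T$ fixing $C$ setwise form a subgroup of the cyclic group $\langle T\rangle\cong\mathbb{Z}/n\mathbb{Z}$ containing $T^{r}$, hence containing $T^{\gcd(r,n)}$; combined with your observation that $T_\theta(C)=C$ (so $T^{r}(C)=C$, not merely $\subseteq$), this recovers the paper's index $\gcd(n,|\theta|)$ and the cyclic case. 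In short: your approach is more structural and avoids explicit B\'ezout bookkeeping, while the paper's computation buys the optimal index, which is what the rest of the paper uses.
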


\begin{proof}
We consider two cases for $|\theta|.$ First, if $\gcd(|\theta|,n)=1$, then there exist $p,q\in \mathbb{Z}$, such that
$pr+qn=1$ or $p|\theta|=1+bn$ for some $b$.
For any $c\in C$, let $c=(c_0,c_1,\dots,c_{n-1}),$ for some $c_0,c_1,\dots,c_{n-1}\in\mathbb{F}_q.$
Then we have
\[
\begin{aligned}
T_\theta^{p|\theta|}(c) & = \left(\theta^{p|\theta|}(c_{n-p|\theta|\;(\text{mod}\;n)}),
\theta^{p|\theta|}(c_{n-p|\theta|+1\;(\text{mod}\; n)}),\right.\\
&\quad \left.\dots,\theta^{p|\theta|}(c_{n-p|\theta|-1\;(\text{mod}\; n)})\right) \\
		    & = \left(\theta^{p|\theta|}(c_{n-1}),\theta^{p|\theta|}(c_{0}),\dots,\theta^{p|\theta|}(c_{n-2})\right) \\
            & = (c_{n-1},c_0,\dots,c_{n-2}) \\
            & = T(c),
\end{aligned}
\]
which means, $T(c)\in C$ or $C$ is a cyclic code.\\

Second, if $\gcd(|\theta|,n)=s,$ for some $s\in \mathbb{N},$ where $s\not=1$,
then there exist $p_1,p_2\in\mathbb{Z}$, such that $p_1|\theta|=s+p_2n$. Hence, we have
\[
\begin{aligned}
T_\theta^{p_1|\theta|}(c) & = \left(\theta^{p_1|\theta|}(c_{n-p_1|\theta|\;(\text{mod}\;n)}),
   \theta^{p_1|\theta|}(c_{n-p_1|\theta|+1\;(\text{mod}\; n)}),\right.\\
            & \quad \left. \dots,\theta^{p_1|\theta|}(c_{n-p_1|\theta|-1\;(\text{mod}\; n)})\right) \\
		    & = \left(\theta^{p_1|\theta|}(c_{n-s}),\theta^{p_1|\theta|}(c_{n-s+1}),\dots,\theta^{p_1|\theta|}(c_{n-s-1})\right) \\
            & = (c_{n-s},c_{n-s+1},\dots,c_{n-s-1})\\
            & = T^s(c)
\end{aligned}
\]
which means, $C$ is a quasi-cyclic code of index $s,$ as we hope.
\end{proof}

\subsection{Case 1 : $(n,|\theta|)=1$}

If $(n,|\theta|)=1$, then $\theta$-cyclic code is a cyclic code.
It means, the number of self-dual $\theta$-cyclic codes of length $n$ is
less than or equal to the number of self-dual cyclic codes of length $n.$
Recall that, based on the result in \cite{jia},
Euclidean self-dual cyclic code of length $n$ over $\mathbb{F}_q$ exist only
when $q$ is a power of 2 and $n$ is an even integer.
Now, let $n=2^{v(n)}\tilde{n},$ where $(\tilde{n},2)=1,$ and
let $\phi$ be the Euler function and $ord_j(i)$ be the smallest
integer $e$ such that $j$ divides $i^e-1.$ Also, let $j$ be an odd
positive integer and $m$ be a positive integer.
We say that the pair $(j,m)$ is {\it good} if $j$ divides $(2^m)^k+1$
for some integer $k\geq0$ and {\it bad} otherwise.
We define the function $\chi$ as follows.

\[
\chi(j,m)=
\begin{cases}
   0, & \text{if}\;(j,m)\;\text{good},\\
   1, & \text{otherwise}.
\end{cases}
\]
Recall that, based on \cite{jia}, the number of Euclidean self-dual cyclic codes over $\mathbb{F}_q$ is
\begin{equation}
\left(1+2^{v(n)}\right)^{\frac{1}{2}\sum_{j|\tilde{n}}\chi(j,m)\phi(j)/ord_j(2^m)}.
\label{numb}
\end{equation}
Note that, by Proposition~\ref{equiv}, when $(n,|\theta|)=1,$
then the number of Euclidean self-dual $\theta$-cyclic codes over $\mathbb{F}_q$
is less than or equal to the one provided by~(\ref{numb}).

Now, we make partitions of the set $\{0,1,\dots,\tilde{n}\}$
into $2^m$-cyclotomic cosets denoted by $C_s,$ where $s$ is the smallest element in $C_s.$
Let $\mathcal{A}$ be the collection of union of $2^m$-cyclotomic cosets
which represent the generators of the self-dual cyclic codes over $\mathbb{F}_q$
of length $n.$ Also, let $\theta(\beta)=\beta^{2^r},$ for all
$\beta\in\mathbb{F}_{2^m}$ and for some $r\in \{1,2,\dots,m\}.$
Also, let $\lambda$ be the map as follows
\[
\begin{array}{lcll}
\lambda_r: & \{0,1,\dots,\tilde{n}\} & \longrightarrow & \{0,1,\dots,\tilde{n}\} \\
 & a & \longmapsto & 2^ra \pmod{\tilde{n}}.
\end{array}
\]
Furthermore, we let $\Lambda_r$ be the map on $\mathcal{A}$
induced by $\lambda_r$ and $\overline{\Lambda}_r$ be the number of elements $A$
in $\mathcal{A}$ such that $\Lambda_r(A)\not= A.$ Then, we have the following result.

\begin{theorem}
If $(n,|\theta|)=1,$ $q=2^m,$ for some $m,$ and $n=2^{v(n)}\tilde{n},$
for some odd positive integer $\tilde{n},$ then the number of
Euclidean self-dual $\theta$-cyclic codes of length $n$ over $\mathbb{F}_q$ is
\[
\left(1+2^{v(n)}\right)^{\frac{1}{2}\sum_{j|\tilde{n}}\chi(j,m)\phi(j)/ord_j(2^m)}-\overline{\Lambda}_r.
\]
\label{numbcyclic}
\end{theorem}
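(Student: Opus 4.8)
The plan is to leverage Proposition~\ref{equiv} to reduce the problem to the already-known enumeration of self-dual cyclic codes, and then to carve out exactly those cyclic codes that are genuinely $\theta$-cyclic. Since $(n,|\theta|)=1$, Proposition~\ref{equiv} says that every $\theta$-cyclic code of length $n$ is in fact cyclic, so the self-dual $\theta$-cyclic codes form a subfamily of the self-dual cyclic codes, whose total number is given by~(\ref{numb}). The first step is therefore to identify the extra constraint that singles out the $\theta$-cyclic codes inside the class of cyclic codes.

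To do this I would write the $\theta$-shift as $T_\theta=T\circ\Theta=\Theta\circ T$, where $\Theta(c)=(\theta(c_0),\dots,\theta(c_{n-1}))$ applies $\theta$ coordinatewise and commutes with the ordinary shift $T$. If $C$ is already cyclic, so $T(C)=C$, then the condition $T_\theta(C)\subseteq C$ becomes $\Theta(C)\subseteq C$; and since $\Theta$ is an additive bijection that preserves $\mathbb{F}_q$-dimension, $\Theta(C)\subseteq C$ forces $\Theta(C)=C$. Thus a cyclic code is $\theta$-cyclic precisely when it is invariant under the coordinatewise Frobenius $\Theta$.

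The key step is to translate this invariance into the combinatorial language of cyclotomic cosets. Working in $\mathbb{F}_q[x]/(x^n-1)$ with $q=2^m$ and $x^n-1=(x^{\tilde n}-1)^{2^{v(n)}}$, a cyclic code is determined by its root structure over the $\tilde n$-th roots of unity, encoded by an element $A\in\mathcal{A}$. Writing $\theta(\beta)=\beta^{2^r}$ and using that in characteristic $2$ one has $c(x)^{2^r}=(\Theta c)(x^{2^r})$, I would show that applying $\Theta$ to a codeword transforms its defining set $Z$ of root exponents into $2^r Z$; reducing exponents modulo $\tilde n$, this is exactly the map $\lambda_r\colon a\mapsto 2^r a$, so that $\Theta(C)$ corresponds to $\Lambda_r(A)$ whenever $C$ corresponds to $A$. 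Consequently $C$ is $\theta$-cyclic if and only if $\Lambda_r(A)=A$.

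Putting the pieces together, the self-dual $\theta$-cyclic codes are in bijection with the $\Lambda_r$-fixed elements of $\mathcal{A}$. Since $|\mathcal{A}|$ equals the number of self-dual cyclic codes given by~(\ref{numb}) and $\overline{\Lambda}_r$ counts exactly those $A\in\mathcal{A}$ with $\Lambda_r(A)\neq A$, the number of fixed elements is~(\ref{numb})${}-\overline{\Lambda}_r$, which is the claimed count. I expect the main obstacle to be the translation in the third step: one must verify carefully that the coordinatewise Frobenius action on codewords matches the multiplication-by-$2^r$ action $\lambda_r$ on coset representatives, tracking both the direction of the map (multiplication by $2^r$ versus $2^{-r}$ appears in the intermediate substitution $\gamma^{2^r}=\beta$) and the passage from roots modulo $n$ to roots modulo $\tilde n$ dictated by the factorization of $x^n-1$. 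Since we ultimately count only $\Lambda_r$-invariant sets, the potential ambiguity between $\lambda_r$ and its inverse does not affect the final tally.
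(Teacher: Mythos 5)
Your proposal is correct and follows essentially the same route as the paper's own proof: both reduce via Proposition~\ref{equiv} to self-dual cyclic codes, identify $\theta$-cyclicity of a cyclic code with invariance of its defining union of $2^m$-cyclotomic cosets under $\Lambda_r$ (using the coefficientwise Frobenius action $\theta(x-\alpha^l)=x-\alpha^{2^rl}$), and then subtract the $\overline{\Lambda}_r$ non-fixed elements of $\mathcal{A}$. Your write-up is in fact somewhat more careful than the paper's, since you make explicit the factorization $T_\theta=T\circ\Theta$, the step from $\Theta(C)\subseteq C$ to $\Theta(C)=C$, and the $2^r$ versus $2^{-r}$ direction issue, which the paper glosses over.
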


\begin{proof}
Let $C=\langle g(x)\rangle,$ where $g(x)$ is a monic polynomial with minimum degree in $C$ and,
therefore, $g(x)$ is a divisor of $x^{\tilde{n}}-1.$
Now, for any $f(x)=\sum_{i=0}^sf_ix^i\in \mathbb{F}_{2^m}[x],$ let
\[
\theta(f)=\sum_{i=0}^s\theta(f_i)x^i=\sum_{i=0}^sf_i^{2^r}x^i.
\]
Also, let $\theta(C)=\langle \theta(g)\rangle.$ It is easy to see that $\theta(C)$ is also
a Euclidean self-dual cyclic code. Now, let $\alpha$ be a primitive
$\tilde{n}$-th root of unity in some extension of $\mathbb{F}_{2^m}.$
Then, by \cite[Theorem 4.2.1(vii)]{huffman}, we have
\[
g(x)=\prod_s M_{\alpha^s}(x),
\]
where $M_{\alpha^s}(x)$ is the minimal polynomial of $\alpha^s$ over $\mathbb{F}_{2^m},$
and $s$ is the representatives of the $2^m$-cyclotomic
cosets modulo $\tilde{n}.$ Now, let
\[
x^{\tilde{n}}-1=f_1(x)\cdots f_s(x)h_1(x)h_1^*(x)\cdots h_t(x)h_t^*(x),
\]
where $f_i(x)$ $(1\leq i\leq s)$ are monic irreducible self-reciprocal polynomials
over $\mathbb{F}_{2^m},$ while $h_j(x)$ and its reciprocal polynomial $h_j^*(x)$ $(1\leq j\leq t)$
are both monic irreducible polynomials over $\mathbb{F}_{2^m}.$ Then, by \cite[Theorem 3.7.6]{huffman}, we have
\[
f_i(x)=\prod_{k\in C_{s_i}}(x-\alpha^k),
\]
and
\[
h_j(x)=\prod_{k\in C_{s_j}}(x-\alpha^k).
\]
Therefore, in order to make $\theta(C)\subseteq C,$
we have to choose $f_i(x)$'s and $h_j(x)$'s such that their
representatives in $\mathcal{A}$ are fixed by $\Lambda_r,$
because $\theta(x-\alpha^l)=x-\alpha^{2^rl}.$ Since $\overline{\Lambda}_r$
is the number of non-fixed-by-$\Lambda_r$-elements in $\mathcal{A},$
we have the desired formula.
\end{proof}

Let us consider the following examples.

\begin{example}
Let $q=4, r=1,$ and $n=6.$ Therefore, $\tilde{n}=3.$
Also, let $\mathbb{F}_4=\mathbb{F}_2[\alpha],$ where $\alpha$ is the root
of the polynomial $x^2+x+1.$ Note that, $\alpha$ is also a primitive $3^{\text{th}}$-root of unity.
As we can check, we have three $4$-cyclotomic cosets modulo $3,$ {\it i.e.}
$C_0=\{0\},C_1=\{1\},$ and $C_2=\{2\}.$ Moreover, $C_0$
represents a self-reciprocal polynomial, while $C_1$ and $C_2$
represent two polynomials which reciprocal to each other.
Since $n=2\tilde{n},$ we have
\[
\mathcal{A}=\{C_0\cup C_1\cup C_1, C_0\cup C_2\cup C_2,C_0\cup C_1\cup C_2\}.
\]
We can see that $\Lambda_1(C_0)=C_0,\Lambda_1(C_1)=C_2,$ and $\Lambda_1(C_2)=C_1.$
Therefore, we have $\overline{\Lambda}_1=2,$ and the number of
self-dual $\theta$-cyclic codes is $|\mathcal{A}|-\overline{\Lambda}_1=1.$
\end{example}

\begin{example}
Let $q=4, r=1,$ and $n=14.$ Therefore, $\tilde{n}=7.$
Let $\alpha\in\mathbb{F}_{4^2}$ be a primitive $7^{\text{th}}$-root of unity.
We can check that we have three $4$-cyclotomic cosets modulo $7.$
They are $C_0=\{0\},C_1=\{1,2,4\},$ and $C_3=\{3,5,6\}.$
Also, $C_0$ represents a self-reciprocal polynomial, while $C_1$ and $C_3$
represent two polynomials which reciprocal to each other. So, we have that
\[
\mathcal{A}=\{C_0\cup C_1\cup C_1,C_0\cup C_3\cup C_3,C_0\cup C_1\cup C_3\}.
\]
We also have $\overline{\Lambda}_1=0,$ because $\Lambda_1(C_i)=C_i$ for all $i=0,1,3.$
Therefore, the number of self-dual $\theta$-cyclic codes is $|\mathcal{A}|=3.$
\end{example}

We have to note that $|\mathcal{A}|$ is the number of self-dual cyclic
codes of length $n$ over $\mathbb{F}_q.$
Therefore, the two examples above give us another
way to count the number of self-dual cyclic codes using
$q$-cyclotomic cosets of modulo $\tilde{n}.$

\subsection{Case 2 : $(n,|\theta|)=d>1$}

If $(n,|\theta|)=d>1$, then $\theta$-cyclic code is
a quasi-cyclic code of index $d$ as stated in Proposition~\ref{equiv}.
Now, for any $C\subseteq \mathbb{F}_q^n,$ define $T_\theta(C)=\{T_\theta(c):~\forall c\in C\}.$
Then, we can easily prove the following lemma.

\begin{lemma}
If $C$ is a Euclidean self-dual quasi-cyclic code of index $d,$
then $T_\theta(C)$ is also an Euclidean self-dual quasi-cyclic code of index $d.$
Moreover, $T_\theta(C)$ has the same Hamming weight enumerator polynomial as $C.$
\label{same}
\end{lemma}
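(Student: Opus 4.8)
The plan is to verify the three assertions in order: first that $T_\theta(C)$ is quasi-cyclic of index $d$, then that it is Euclidean self-dual, and finally that it shares the Hamming weight enumerator of $C$. The key observation driving all three parts is that $T_\theta$ is a bijective $\mathbb{F}_q$-semilinear map on $\mathbb{F}_q^n$: it is additive, it commutes with scalars only up to the field automorphism $\theta$, and crucially, by Proposition~\ref{equiv}, since $(n,|\theta|)=d>1$ we have that $C$ (hence $T_\theta(C)$) is stabilized by the ordinary shift $T^d$, i.e. $T_\theta^{|\theta|}=T^d$ on $\mathbb{F}_q^n$. I would record at the outset that $T_\theta(C)$ is a linear subspace of the same dimension as $C$, since $T_\theta$ is a bijection; this already forces $\dim T_\theta(C)=\dim C=n/2$, which is the dimension any self-dual code must have.

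First I would show $T_\theta(C)$ is quasi-cyclic of index $d$. Take $c'\in T_\theta(C)$, so $c'=T_\theta(c)$ for some $c\in C$. Since $C$ is quasi-cyclic of index $d$, we have $T^d(c)\in C$, and I would check the commutation relation $T^d\circ T_\theta = T_\theta\circ T^d$ directly from the coordinate formulas in Definition, as both maps act by shifting indices and the automorphism $\theta$ acts coordinatewise (shifting and applying $\theta$ commute). Hence $T^d(c')=T^d(T_\theta(c))=T_\theta(T^d(c))\in T_\theta(C)$, giving the quasi-cyclic property.

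Next I would establish self-duality. The main tool here is the interaction of $T_\theta$ with the Euclidean product. Since $\theta$ is a field automorphism fixing the prime field and in characteristic relevant to self-dual codes acts as a Frobenius-type power, I would verify the identity $[T_\theta(c),T_\theta(c')]=\theta([c,c'])$ for all $c,c'$, which follows because reindexing the summation is a permutation and $\theta$ is multiplicative and additive. Consequently, if $[c,c']=0$ for all $c,c'\in C$ (so $C\subseteq C^\perp$), then $[T_\theta(c),T_\theta(c')]=\theta(0)=0$, showing $T_\theta(C)$ is self-orthogonal; combined with the dimension count $\dim T_\theta(C)=n/2$ from the first paragraph, self-orthogonality upgrades to self-duality. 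I expect this identity $[T_\theta(c),T_\theta(c')]=\theta([c,c'])$ to be the technical heart of the argument, since it requires carefully matching the shifted indices of the two arguments so that the cross terms pair up correctly modulo $n$.

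Finally, the Hamming weight enumerator is immediate: applying $\theta$ to a coordinate sends $0\mapsto 0$ and nonzero to nonzero (as $\theta$ is a bijection fixing $0$), and the cyclic reindexing is a permutation of coordinates, so $T_\theta$ preserves the Hamming weight of every vector. Therefore $T_\theta$ restricts to a weight-preserving bijection $C\to T_\theta(C)$, and the two codes have identical weight distributions, hence the same weight enumerator polynomial. The main obstacle throughout is bookkeeping the index arithmetic modulo $n$ in the semilinear identities rather than any conceptual difficulty; once the commutation and the $\theta$-twisted bilinearity are pinned down, all three claims follow.
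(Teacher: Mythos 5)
Your proposal is correct, and in fact supplies more than the paper does: the paper simply asserts that ``we can easily prove the following lemma'' and gives no argument at all. Your three steps are exactly the right ones --- the twisted-orthogonality identity $[T_\theta(c),T_\theta(c')]=\theta([c,c'])$ together with the dimension count $\dim T_\theta(C)=\dim C$ (valid since $T_\theta$ is an additive bijection and $\theta$ is surjective, so $T_\theta(C)$ is an $\mathbb{F}_q$-subspace) upgrades self-orthogonality to self-duality, the commutation $T^d\circ T_\theta=T_\theta\circ T^d$ gives the index-$d$ quasi-cyclicity, and weight preservation is immediate from $\theta(0)=0$ and bijectivity. One small inaccuracy: the aside that $T_\theta^{|\theta|}=T^d$ is not literally true (from Proposition~\ref{equiv} one only gets $T_\theta^{p_1|\theta|}=T^d$ for suitable $p_1$, and $T_\theta^{|\theta|}=T^{|\theta|\bmod n}$), but you never use this identity, so it does not affect the argument.
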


The Lemma~\ref{same} above shows that the map $T_\theta$ preserves
Euclidean self-duality and quasi-cyclic property.
Moreover, we have to note that, if $C$ is a quasi-cyclic code such that
$T_\theta(C)\subseteq C,$ then $C$ is also a $\theta$-cyclic code.

For any code $C\subseteq \mathbb{F}_q^n,$ let $\rho$ be the map as follows,
\begin{equation}
\rho(C)=
\begin{cases}
0, & \text{if} T_\theta(C)\subseteq C,\\
1, & \text{otherwise}.
\end{cases}
\label{charmap}
\end{equation}

Let us recall structures of quasi-cyclic code described in \cite{ling}.
Let $n=dm,$ $R=\mathbb{F}_q[Y]/(Y^m-1),$ and $C$ be a quasi-cyclic code
over $\mathbb{F}_q$ of length $dm$ with index $d.$ Let
\[
\mathbf{c}=(c_{00},c_{01},\dots,c_{0,l-1},c_{10},\dots,c_{1,l-1},\dots,c_{m-1,0},\dots,c_{m-1,l-1})\in C.
\]
Then, the map $\phi : \mathbb{F}_q^{dm}\longrightarrow R^d$ defined by
\[
\phi(\mathbf{c})=(\mathbf{c}_0(Y),\mathbf{c}_1(Y),\dots,\mathbf{c}_{l-1}(Y))\in R^d,
\]
where $\mathbf{c}_j(Y)=\sum_{i=0}^{m-1}c_{ij}Y^i\in R,$ is
a one-to-one correspondence between quasi-cyclic codes over $\mathbb{F}_q$
of index $d$ and length $dm$ and linear codes over $R$ of length $d.$ Moreover,
\[
R^d = \left(\bigoplus_{i=1}^sG_i^d\right)\oplus\left(\bigoplus_{j=1}^t\left(H_j'^d\oplus H_j''^d\right)\right),
\]
where $G_i=\mathbb{F}_q[Y]/(g_i), H_j'=\mathbb{F}_q[Y]/(h_j),$ and $H_j''=\mathbb{F}_q[Y]/(h_j^*),$
for some self-reciprocal irreducible polynomials $g_i$ $(1\leq i\leq s)$
and irreducible reciprocal pairs $h_j$ and $h_j^*$ $(1\leq j\leq t)$ which satisfy
\[
Y^m-1=\delta g_1\cdots g_s h_1h_1^*\cdots h_th_t^*.
\]
The above decomposition gives
\[
C=\left(\bigoplus_{i=1}^sC_i\right)\oplus\left(\bigoplus_{j=1}^t\left(C_j'\oplus C_j''\right)\right),
\]
where $C_i$ is a linear code over $G_i$ of length $d,$ $C_j'$ is a linear code over $H_j'$
of length $d,$ and $C_j''$ is a linear code over $H_j''$ of length $d.$

Now, let
\[
\rho_{G_i}=\sum_{C\;\text{code over}\;G_i}\rho(\phi^{-1}(C)),
\]
\[
\rho_{H_j',H_j''}=\sum_{{C' \;\text{code over}\;H_j'\atop C''\;\text{code over}\;H_j''}}\rho(\phi^{-1}(C'\oplus C'')),
\]
and
\[
N(d,q)=1+\displaystyle{\sum_{k=1}^d\frac{(q^d-1)(q^d-q)\cdots(q^d-q^{k-1})}{(q^k-1)(q^k-q)\cdots(q^k-q^{k-1})}}.
\]
Then, we have the following result for $d=2.$

\begin{proposition}
Let $\theta$ be an automorphism in $\mathbb{F}_q$ such that $(n,|\theta|)=2,$
where $q$ be a prime power satisfying one of the following conditions,
\begin{enumerate}
\item $q$ is a power of 2,
\item $q=p^b,$ where $p$ is a prime congruent to $1\;\text{mod}\;4,$ or
\item $q=p^2b,$ where $p$ is a prime congruent to $3\;\text{mod}\;4.$
\end{enumerate}
Also, let $m$ be an integer relatively prime to $q.$
Suppose that $Y^m-1=\delta g_1\cdots g_s h_1h_1^*\cdots h_th_t^*$ as mentioned above.
Suppose further that $g_1=Y-1$ and, if $m$ is even, $g_2=Y+1.$ Let the degree
of $g_i$ be $2d_i$ and the degree of $h_j$ (also $h_j^*$) be $e_j.$
Then, the number of distinct Euclidean self-dual $\theta$-cyclic codes of
length $2m$ with index 2 over $\mathbb{F}_q$ is

\[
\begin{array}{ll}
\displaystyle{4\prod_{i=3}^s(q^{d_i}+1-\rho_{G_i})\prod_{j=1}^t(N(2,q^{e_j})-\rho_{H_j',H_j''})},
& \text{if}\;m\;\text{is even}\;\text{and}\;q\;\text{is odd}\\

\displaystyle{2\prod_{i=2}^s(q^{d_i}+1-\rho_{G_i})\prod_{j=1}^t(N(2,q^{e_j})-\rho_{H_j',H_j''})},
& \text{if}\;m\;\text{is odd}\;\text{and}\;q\;\text{is odd}\\

\displaystyle{\prod_{i=2}^s(q^{d_i}+1-\rho_{G_i})\prod_{j=1}^t(N(2,q^{e_j})-\rho_{H_j',H_j''})},
& \text{if}\;m\;\text{is odd}\;\text{and}\;q\;\text{is even}\\
\end{array}
\]
\end{proposition}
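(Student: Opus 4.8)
The plan is to identify the codes to be counted with the $T_\theta$-invariant Euclidean self-dual quasi-cyclic codes of index $2$ and then to count block-by-block on the Ling--Solé decomposition recalled above. First I would record the reduction: by Proposition~\ref{equiv} a $\theta$-cyclic code of length $n=2m$ is quasi-cyclic of index $2$, and by the remark following Lemma~\ref{same} a quasi-cyclic code $C$ of index $2$ with $T_\theta(C)\subseteq C$ is $\theta$-cyclic. Since $T_\theta$ is a bijection of $\mathbb{F}_q^{2m}$, the inclusion $T_\theta(C)\subseteq C$ is the same as $T_\theta(C)=C$. Hence the Euclidean self-dual $\theta$-cyclic codes of length $2m$ are exactly the Euclidean self-dual quasi-cyclic codes of index $2$ with $\rho(C)=0$.

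Through the isomorphism $\phi$ I would write $C=\left(\bigoplus_{i=1}^s C_i\right)\oplus\left(\bigoplus_{j=1}^t(C_j'\oplus C_j'')\right)$; Euclidean self-duality is equivalent to each $C_i$ being self-dual for the form on $G_i$ induced by $Y\mapsto Y^{-1}$, together with $C_j''=(C_j')^{\perp}$ on each reciprocal pair. For a self-reciprocal $g_i$ of degree $2d_i\ge 2$ one has $G_i\cong\mathbb{F}_{q^{2d_i}}$ with the induced involution fixing $\mathbb{F}_{q^{d_i}}$, so there are exactly $q^{d_i}+1$ Hermitian self-dual lines; for a reciprocal pair the code $C_j'$ is an arbitrary one of the $N(2,q^{e_j})$ linear codes of length $2$ over $H_j'\cong\mathbb{F}_{q^{e_j}}$, with $C_j''$ then determined. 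For the degree-one factors $g_1=Y-1$ and, when $m$ is even and $q$ is odd, $g_2=Y+1$, the involution is trivial and $G_i\cong\mathbb{F}_q$, so the components are the Euclidean self-dual codes of length $2$ in $\mathbb{F}_q^2$; under each hypothesis on $q$ the scalar $-1$ is a square, giving two such codes when $q$ is odd and the single code $\langle(1,1)\rangle$ when $q$ is even.

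Next I would transport $T_\theta$ through $\phi$: a short calculation gives $T_\theta\colon(\mathbf{a},\mathbf{b})\mapsto(Y\,\theta(\mathbf{b}),\theta(\mathbf{a}))$ on $R^2$, where $\theta$ acts on coefficients. Because $\theta$ fixes $Y$ and fixes $Y^m-1$ (whose coefficients lie in the prime field), $T_\theta$ respects the primary decomposition $R^2=\bigoplus G_i^2\oplus\bigoplus(H_j'^2\oplus H_j''^2)$, so $T_\theta(C)=C$ holds precisely when each summand is $T_\theta$-invariant and invariance can be tested one block at a time. This is exactly what the quantities $\rho_{G_i}$ and $\rho_{H_j',H_j''}$ record: among the self-dual components they count those whose lift fails to be $\theta$-cyclic. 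Subtracting, the number of admissible components is $q^{d_i}+1-\rho_{G_i}$ over each higher $G_i$ and $N(2,q^{e_j})-\rho_{H_j',H_j''}$ over each pair, and the total is the product of these over all blocks.

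It remains to evaluate the degree-one blocks, which furnish the leading constant, and to assemble the three cases. On $G_i\cong\mathbb{F}_q$ the operator restricts to $(x,y)\mapsto(\theta(y),\theta(x))$, so $T_\theta$-invariance of a self-dual line $\langle(1,\gamma)\rangle$ with $\gamma^2=-1$ amounts to $\theta$ stabilizing that line rather than interchanging $\langle(1,\gamma)\rangle$ and $\langle(1,-\gamma)\rangle$; the hypothesis $(n,|\theta|)=2$ together with the congruence condition on $q$ is precisely what pins down the action of $\theta$ on the fourth roots of unity and forces invariance, so these blocks contribute with no $\rho$-correction. Granting this, $g_1$ and $g_2$ together contribute $2\cdot 2=4$ when $m$ is even and $q$ is odd, the single factor $Y-1$ contributes $2$ when $m$ is odd and $q$ is odd, and it contributes $1$ when $q$ is even (where $\gcd(m,q)=1$ forces $m$ odd and $Y+1=Y-1$); multiplying by $\prod(q^{d_i}+1-\rho_{G_i})$ over the remaining self-reciprocal factors and by $\prod(N(2,q^{e_j})-\rho_{H_j',H_j''})$ over the pairs yields the three displayed formulas. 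I expect this last point---verifying that the degree-one self-dual lines are automatically $\theta$-invariant---to be the main obstacle, since it is the only place where $(n,|\theta|)=2$ and the condition on $q$ enter decisively, and it is where the compatibility of the single skew shift $T_\theta$ with the index-$2$ splitting is most delicate.
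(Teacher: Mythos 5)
Your overall route is the same as the paper's: the paper's proof is a one-line citation of Ling--Sol\'e's enumeration of Euclidean self-dual quasi-cyclic codes of index $2$ together with the remark that $\rho$ counts the non-$T_\theta$-invariant ones, and you reconstruct exactly that (CRT decomposition, the counts $q^{d_i}+1$ and $N(2,q^{e_j})$ per block, blockwise subtraction of $\rho$). The reduction in your first paragraph and the computation $\phi\circ T_\theta\circ\phi^{-1}\colon(\mathbf{a},\mathbf{b})\mapsto(Y\theta(\mathbf{b}),\theta(\mathbf{a}))$ are correct. However, the two steps where you go beyond the paper's citation are precisely where the argument has genuine gaps. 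First, the claim that $T_\theta$ respects the primary decomposition ``because $\theta$ fixes $Y^m-1$'' does not hold: the individual factors $g_i,h_j,h_j^*$ have coefficients in $\mathbb{F}_q$, not in the prime field, and applying $\theta=(\,\cdot\,)^{p^r}$ to coefficients sends the factor with root set $\{\alpha^k: k\in C_s\}$ to the factor with root set indexed by $p^rC_s$. So $T_\theta$ permutes the blocks $G_i$, $H_j'$, $H_j''$ rather than fixing them; the paper's own first worked example ($q=4$, $\tilde n=3$, where multiplication by $2$ swaps the cosets $\{1\}$ and $\{2\}$) exhibits exactly this swap. When the permutation is nontrivial, $T_\theta$-invariance of $C$ is not equivalent to a condition on each component separately, and the product of blockwise-corrected factors is not justified. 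You would need either to add the hypothesis that each cyclotomic coset mod $m$ is stable under multiplication by $p^r$, or to recount in terms of orbits of that permutation.

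Second, the degree-one blocks, which you yourself flag as the main obstacle, are not merely delicate --- the asserted automatic invariance fails. On $G_1=\mathbb{F}_q[Y]/(Y-1)\cong\mathbb{F}_q$ the transported map is $(a,b)\mapsto(\theta(b),\theta(a))$, and for a self-dual line $\langle(1,\gamma)\rangle$ with $\gamma^2=-1$ one gets $T_\theta(1,\gamma)=(\theta(\gamma),1)$, which spans $\langle(1,-\theta(\gamma))\rangle$; hence the line is $T_\theta$-stable iff $\theta(\gamma)=-\gamma$, i.e.\ iff $p^r\equiv 3\pmod 4$. Under hypothesis (2) of the statement ($p\equiv 1\pmod 4$) one has $\theta(\gamma)=\gamma\neq-\gamma$, so \emph{neither} of the two self-dual lines over $G_1$ is invariant, and the leading constants $4$ and $2$ cannot be taken over from Ling--Sol\'e without a $\rho$-correction. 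So the step you deferred is not just unproved; as stated it is false in at least one of the allowed cases, and closing the argument requires either restricting $\theta$ or carrying an explicit correction term on the degree-one blocks. (The paper's one-line proof glosses over both of these points, so your proposal is faithful to its strategy, but it does not yet constitute a proof.)
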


\begin{proof}
Apply \cite[Proposition 6.2]{ling} and use the fact that the map $\rho$
counts the number of codes which are not invariant under the action of $T_\theta.$
\end{proof}

Furthermore, using \cite[Proposition 6.6, Proposition 6.9, Proposition 6.10, Proposition 6.12, and Proposition 6.13]{ling}
and the fact about the map $\rho$ as above, we have the following results.

\begin{proposition}
Suppose $q\equiv 1\;\text{mod}\;4$ and $d$ is even, or $q\equiv 3\;\text{mod}\;4$
and $d \equiv 0\;\text{mod}\;4.$ Then the number of Euclidean self-dual
$\theta$-cyclic codes of length $2d$ over $\mathbb{F}_q$ is
\[
\displaystyle{4\prod_{i=1}^{\frac{d}{2}-1}(q^i+1-\rho_{G_i})^2}.
\]
\end{proposition}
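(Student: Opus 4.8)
The plan is to mimic the proof of the preceding proposition: I would read off the total number of Euclidean self-dual quasi-cyclic codes from \cite{ling}, and then use the map $\rho$ of~(\ref{charmap}) to discard those that are not actually $\theta$-cyclic. Since here $(n,|\theta|)=d$, Proposition~\ref{equiv} says that a $\theta$-cyclic code of length $2d$ is a quasi-cyclic code of index $d$; conversely, by the remark following Lemma~\ref{same}, such a quasi-cyclic code $C$ is $\theta$-cyclic exactly when $T_\theta(C)\subseteq C$. So the objects to be counted are the Euclidean self-dual quasi-cyclic codes of index $d$ and length $2d$ --- equivalently, the Euclidean self-dual linear codes of length $d$ over $R=\mathbb{F}_q[Y]/(Y^2-1)$ --- that are fixed by $T_\theta$.

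For the uncorrected total I would invoke the relevant statement among \cite[Propositions 6.6, 6.9, 6.10, 6.12, 6.13]{ling}. Because $m=2$, the polynomial $Y^2-1=(Y-1)(Y+1)$ splits into two self-reciprocal linear factors with no reciprocal pair, so the decomposition $R^d=G_1^d\oplus G_2^d$ has no $H_j',H_j''$ summands and no $N(\cdot,\cdot)$ term survives. Each of the two isotypic components contributes one copy of the number of Euclidean self-dual codes of length $d$ over $\mathbb{F}_q$; under either hypothesis --- $q\equiv 1\bmod 4$ with $d$ even, or $q\equiv 3\bmod 4$ with $d\equiv 0\bmod 4$ --- such codes exist and number $2\prod_{i=1}^{d/2-1}(q^i+1)$. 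Hence the number of Euclidean self-dual quasi-cyclic codes is $\bigl(2\prod_{i=1}^{d/2-1}(q^i+1)\bigr)^2=4\prod_{i=1}^{d/2-1}(q^i+1)^2$; it is precisely the two congruence conditions that make this count valid.

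It remains to impose $T_\theta(C)\subseteq C$. The idea is that both self-duality and $T_\theta$-invariance respect the splitting $C=C_1\oplus C_2$, so the enumeration factors over the two components and, within each, over the $d/2-1$ factors of the self-dual mass formula. By the definitions of $\rho$ in~(\ref{charmap}) and of $\rho_{G_i}$, the number of self-dual choices that fail $T_\theta$-invariance at the $i$-th factor is exactly $\rho_{G_i}$, so deleting them replaces $q^i+1$ by $q^i+1-\rho_{G_i}$. Propagating this correction through both components yields $4\prod_{i=1}^{d/2-1}(q^i+1-\rho_{G_i})^2$, which is the asserted formula.

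The hard part will be justifying that the $T_\theta$-invariance condition genuinely factors through the decomposition --- that $T_\theta$ preserves each isotypic component rather than mixing or interchanging them, and that the count of non-invariant self-dual codes at a given factor is faithfully recorded by $\rho_{G_i}$ and is identical for the two components, so that the product really appears as a perfect square. Checking the compatibility of $T_\theta$ with the Chinese-remainder splitting induced by $Y^2-1$, and confirming that under the two congruence hypotheses no reciprocal-pair factors intrude, is the only delicate point; everything else is the bookkeeping already done in \cite{ling}.
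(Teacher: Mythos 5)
Your proposal matches the paper's own argument, which is simply to cite the corresponding count of Euclidean self-dual quasi-cyclic codes from Ling--Sol\'{e} (their Proposition 6.6, the $m=2$ case with $Y^2-1=(Y-1)(Y+1)$ giving $4\prod_{i=1}^{d/2-1}(q^i+1)^2$) and then subtract the non-$T_\theta$-invariant ones via the map $\rho$; you reconstruct exactly this. The ``delicate point'' you flag at the end --- that the correction by $\rho_{G_i}$ must be compatible with the CRT splitting and with each factor of the mass formula --- is real, but the paper itself does not address it either, so your proposal is at least as complete as the published proof.
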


\begin{proposition}
Suppose that $q$ and $d$ satisfy one of the following conditions,
\begin{enumerate}
\item $q\equiv 11\;\text{mod}\;12$ and $d\equiv 0\;\text{mod}\;4,$
\item $q\equiv 2\;\text{mod}\;3$ but $q\not\equiv 11\;\text{mod}\;12$ and $d$ is even.
\end{enumerate}
Then the number of distinct Euclidean self-dual $\theta$-cyclic codes over $\mathbb{F}_q$ of length $3d$ is

\[
\displaystyle{b(q+1)\prod_{i=1}^{\frac{d}{2}-1} (q^i+1-\rho_{G_i})(q^{2i+1}+1-\rho_{H_i'})},
\]
where $b=1$ if $q$ is even, 2 if $q$ is odd.
\end{proposition}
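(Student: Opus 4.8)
The plan is to identify the Euclidean self-dual $\theta$-cyclic codes of length $3d$ with exactly those Euclidean self-dual quasi-cyclic codes of index $d$ that are invariant under $T_\theta$, and then to obtain the count by correcting the Ling--Sol\'e enumeration with the map $\rho$ of (\ref{charmap}). First I would record the reduction: by Proposition~\ref{equiv} a $\theta$-cyclic code of length $n=3d$ with $(n,|\theta|)=d$ is quasi-cyclic of index $d$, and by the remark following Lemma~\ref{same} a quasi-cyclic $C$ with $T_\theta(C)\subseteq C$ is $\theta$-cyclic. Since $T_\theta$ is a bijection of $\mathbb{F}_q^n$ and a self-dual code has fixed cardinality $q^{n/2}$, the inclusion $T_\theta(C)\subseteq C$ already forces $T_\theta(C)=C$. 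Hence the objects to be counted are precisely the self-dual quasi-cyclic codes $C$ of index $d$ with $\rho(C)=0$.

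Next I would specialize the Ling--Sol\'e decomposition recalled above to $m=3$, so that $R=\mathbb{F}_q[Y]/(Y^3-1)$ and $Y^3-1=(Y-1)(Y^2+Y+1)$. The hypothesis $q\equiv 2\bmod 3$ guarantees that $\mathbb{F}_q$ contains no primitive cube root of unity, so $Y^2+Y+1$ is irreducible; being palindromic, it is also self-reciprocal. Thus $Y^3-1$ has the two self-reciprocal irreducible factors $g_1=Y-1$ and $g_2=Y^2+Y+1$ and no reciprocal pairs, giving $G_1=\mathbb{F}_q$ and $G_2=\mathbb{F}_{q^2}$. Writing $C=C_1\oplus C_2$ accordingly, self-duality of $C$ is equivalent to $C_1$ being Euclidean self-dual of length $d$ over $\mathbb{F}_q$ (the involution $Y\mapsto Y^{-1}$ being trivial on $G_1$) together with $C_2$ being Hermitian self-dual of length $d$ over $\mathbb{F}_{q^2}$ (on $G_2$ the map $Y\mapsto Y^{-1}$ sends the cube root of unity $\omega$ to $\omega^q=\omega^2$, i.e. it is the nontrivial Frobenius).

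Then I would insert the two enumeration formulas supplied by \cite[Propositions 6.6, 6.9, 6.10, 6.12, 6.13]{ling}. The number of Euclidean self-dual codes of length $d$ over $\mathbb{F}_q$ is $b\prod_{i=1}^{d/2-1}(q^i+1)$, where the two hypotheses on $(q,d)$ are exactly what makes such codes exist and fix $b=2$ for $q$ odd and $b=1$ for $q$ even; indeed $q\equiv 11\bmod 12$ forces $q\equiv 3\bmod 4$, where existence requires $d\equiv 0\bmod 4$, whereas the residues $q\equiv 2,5,8\bmod 12$ cover the even $q$ and the $q\equiv 1\bmod 4$ cases, where any even $d$ suffices. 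The number of Hermitian self-dual codes of length $d$ over $\mathbb{F}_{q^2}$ is $\prod_{i=1}^{d/2}(q^{2i-1}+1)=(q+1)\prod_{i=1}^{d/2-1}(q^{2i+1}+1)$. Multiplying the two counts yields the uncorrected total $b(q+1)\prod_{i=1}^{d/2-1}(q^i+1)(q^{2i+1}+1)$, which is precisely the claimed answer with every $\rho$ set to $0$.

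Finally I would subtract the codes with $\rho(C)=1$. Because $T_\theta$ respects the decomposition $C=C_1\oplus C_2$, a self-dual $C$ is $T_\theta$-invariant if and only if each component is, so the non-invariant codes can be removed factor by factor: each occurrence of $(q^i+1)$ coming from the $\mathbb{F}_q$-component is replaced by $(q^i+1-\rho_{G_i})$ and each occurrence of $(q^{2i+1}+1)$ coming from the $\mathbb{F}_{q^2}$-component by $(q^{2i+1}+1-\rho_{H_i'})$, in accordance with the definitions of $\rho_{G_i}$ and $\rho_{H_i'}$. The hard part is exactly this localization step: one must verify that $T_\theta$ acts compatibly with the recursive (Witt-type) product structure underlying the two self-dual counts, so that non-invariance is detected one factor at a time and the correction genuinely distributes over the product rather than merely bounding the total. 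Granting this, the product gives the stated formula.
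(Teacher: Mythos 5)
Your proposal is correct and follows essentially the same route as the paper, whose entire proof of this proposition is the single remark that one applies the relevant Ling--Sol\'e enumeration propositions and corrects each factor by the map $\rho$. In fact you supply substantially more detail than the paper does --- the reduction from $T_\theta(C)\subseteq C$ to $T_\theta(C)=C$, the factorization of $Y^3-1$ under $q\equiv 2\bmod 3$, and the explicit Euclidean/Hermitian split --- and you correctly flag the one step the paper also leaves unjustified, namely that $T_\theta$-invariance localizes to the CRT components so that the $\rho$-correction distributes over the product.
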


\begin{proposition}
Let $q$ and $d$ satisfy one of the following conditions,
\begin{enumerate}
\item $q\equiv 7\;\text{mod}\;12$ and $d\equiv 0\;\text{mod}\;4,$
\item $q\equiv 1\;\text{mod}\;3$ but $q\not\equiv 7\;\text{mod}\;12$ and $d$ is even.
\end{enumerate}
Then the number of distinct Euclidean self-dual $\theta$-cyclic codes over $\mathbb{F}_q$ of length $3d$ is

\[
\displaystyle{b\left(\prod_{i=1}^{\frac{d}{2}-1} (q^i+1-\rho_{G_i})\right)(N(d,q)-\rho_{H',H''})},
\]
where $b=1$ if $q$ is even, 2 if $q$ is odd.
\end{proposition}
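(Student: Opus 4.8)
The plan is to realise these codes inside the Ling--Solé decomposition for quasi-cyclic codes and then to delete those that fail the extra $\theta$-cyclic constraint by means of the indicator $\rho$ of~(\ref{charmap}). By Proposition~\ref{equiv}, since $(3d,|\theta|)=d$, every $\theta$-cyclic code $C$ of length $3d$ is quasi-cyclic of index $d$, so it lies in the decomposition with $m=3$ and $R=\mathbb{F}_q[Y]/(Y^3-1)$. As $q\equiv 1\bmod 3$, a primitive cube root of unity $\omega$ lies in $\mathbb{F}_q$ and $Y^3-1=(Y-1)(Y-\omega)(Y-\omega^2)$; here $g_1=Y-1$ is self-reciprocal (so $G_1=\mathbb{F}_q$), while $h_1=Y-\omega$ and $h_1^*=Y-\omega^2$ form a reciprocal pair with $H_1'=H_1''=\mathbb{F}_q$ and $e_1=1$. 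Thus the decomposition carries exactly one self-reciprocal block and one reciprocal pair, and $C=C_1\oplus(C_1'\oplus C_1'')$ accordingly.

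Next I would count the Euclidean self-dual quasi-cyclic codes in this configuration using the relevant case of \cite[Propositions 6.12 and 6.13]{ling}. The self-reciprocal block forces $C_1$ to be a Euclidean self-dual code of length $d$ over $\mathbb{F}_q$, and the arithmetic hypotheses are precisely those making such codes exist: $q\equiv 7\bmod 12$ gives $q\equiv 3\bmod 4$, whence $4\mid d$ is required, while otherwise $q$ is even or $q\equiv 1\bmod 4$ and $d$ need only be even. Their number is $b\prod_{i=1}^{d/2-1}(q^i+1)$, with $b=1$ for $q$ even and $b=2$ for $q$ odd. The reciprocal pair, on the other hand, forces $C_1''$ to be the dual of $C_1'$, so $C_1'$ ranges over all linear codes of length $d$ over $\mathbb{F}_q$ and contributes the factor $N(d,q)$. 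Multiplying the two blocks gives $b\big(\prod_{i=1}^{d/2-1}(q^i+1)\big)N(d,q)$ Euclidean self-dual quasi-cyclic codes.

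It remains to keep only those that are genuinely $\theta$-cyclic, that is, satisfy $T_\theta(C)\subseteq C$. By Lemma~\ref{same}, $T_\theta$ preserves Euclidean self-duality and the quasi-cyclic property, and a suitable power of $T_\theta$ realises multiplication by $Y$ on $R^d$ (Proposition~\ref{equiv}); tracking the action of $\theta$ on the roots $\alpha^k$ exactly as in the proof of Theorem~\ref{numbcyclic} shows that $T_\theta$ stabilises the block $G_1$ and permutes the pair $H_1',H_1''$ compatibly with the form, so $T_\theta$-invariance can be tested one block at a time. By the definition of $\rho$, the quantities $\rho_{G_i}$ and $\rho_{H',H''}$ count precisely the self-dual choices in the self-reciprocal block and in the reciprocal pair that are not stabilised, so subtracting them inside each factor yields the asserted count. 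I expect the crux to be exactly this last step: one must check that $T_\theta$-invariance factorises across the Ling--Solé blocks, so that the correction may be distributed as $(q^i+1-\rho_{G_i})$ and $(N(d,q)-\rho_{H',H''})$ rather than subtracted only once from the product; concretely this amounts to verifying that $T_\theta$ permutes the primitive idempotents of $R^d$ compatibly with the Euclidean pairing, after which the bookkeeping is routine.
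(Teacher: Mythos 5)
Your proposal matches the paper's own argument: the paper proves this proposition in one line, by citing the Ling--Sol\'e count of Euclidean self-dual cubic (index-$d$, co-index $3$) quasi-cyclic codes and then subtracting, via the indicator $\rho$, those codes not invariant under $T_\theta$. You in fact supply more detail than the paper does --- the factorization of $Y^3-1$ under $q\equiv 1\bmod 3$ into one self-reciprocal linear factor and one reciprocal pair, the block-by-block count $b\prod_{i=1}^{d/2-1}(q^i+1)\cdot N(d,q)$, and the observation that distributing the $\rho$-correction into the individual factors $(q^i+1-\rho_{G_i})$ is the delicate step --- a point the paper leaves entirely unexamined.
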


\begin{proposition}
Let $q$ be an odd prime power such that $-1$ is not a square in $\mathbb{F}_q$
and let $d\equiv 0\;\text{mod}\;4.$ Then the number of distinct Euclidean
self-dual $\theta$-cyclic codes over $\mathbb{F}_q$ of length $4d$ is
\[
\displaystyle{4(q+1)\prod_{i=1}^{\frac{d}{2}-1} (q^i+1-\rho_{G_i})^2(q^{2i+1}+1-\rho_{H_i'})}.
\]
\end{proposition}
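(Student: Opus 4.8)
The plan is to follow the pattern of the three preceding propositions: translate the problem into counting $T_\theta$-invariant Euclidean self-dual quasi-cyclic codes, read off the base enumeration from \cite{ling}, and then discard the non-invariant codes by means of the map $\rho$. Since $(n,|\theta|)=d$, Proposition~\ref{equiv} tells us these are quasi-cyclic codes of index $d$, so in the notation $n=dm$, $R=\mathbb{F}_q[Y]/(Y^m-1)$ the co-index is $m=4$. The first step is therefore to factor $Y^4-1$ over $\mathbb{F}_q$. As $q$ is odd, $Y^4-1=(Y-1)(Y+1)(Y^2+1)$; and since $-1$ is a non-square (so $q\equiv 3\bmod 4$), the factor $Y^2+1$ is irreducible. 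All three factors are then self-reciprocal, yielding components $G_1=G_2=\mathbb{F}_q$ and $G_3=\mathbb{F}_{q^2}$, and there are no reciprocal pairs $h_j,h_j^*$.

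Under the decomposition $C=\bigl(\bigoplus_i C_i\bigr)\oplus\bigl(\bigoplus_j (C_j'\oplus C_j'')\bigr)$ recalled from \cite{ling}, a self-dual code of index $d$ and co-index $4$ splits as $C_1\oplus C_2\oplus C_3$, where $C_1,C_2$ are Euclidean self-dual codes of length $d$ over $\mathbb{F}_q$ and $C_3$ is a Hermitian self-dual code of length $d$ over $\mathbb{F}_{q^2}$. Because $d\equiv 0\bmod 4$ these exist, and the classical counts give $2\prod_{i=1}^{d/2-1}(q^i+1)$ for each of $C_1,C_2$ (the factor $2$ arising because $q$ is odd) and $\prod_{i=1}^{d/2}(q^{2i-1}+1)=(q+1)\prod_{i=1}^{d/2-1}(q^{2i+1}+1)$ for $C_3$. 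Multiplying the three contributions is precisely the content of \cite[Propositions 6.12 and 6.13]{ling} and yields the base number $4(q+1)\prod_{i=1}^{d/2-1}(q^i+1)^2(q^{2i+1}+1)$ of Euclidean self-dual quasi-cyclic codes of length $4d$.

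It then remains to restrict to the $\theta$-cyclic codes, that is, to those $C$ with $T_\theta(C)\subseteq C$. By Lemma~\ref{same} the operator $T_\theta$ preserves Euclidean self-duality and the quasi-cyclic index, and as it commutes with the Chinese-remainder projections it acts componentwise; hence a self-dual code is fixed by $T_\theta$ if and only if each of its components is, and in each component the number of choices that fail to be $T_\theta$-invariant is exactly what $\rho$ records. Applying this factor by factor — to the two $\mathbb{F}_q$-components through $\rho_{G_i}$ and to the $\mathbb{F}_{q^2}$-component through $\rho_{H_i'}$ — turns $(q^i+1)^2$ into $(q^i+1-\rho_{G_i})^2$ and $(q^{2i+1}+1)$ into $(q^{2i+1}+1-\rho_{H_i'})$, which is the asserted formula. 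The step I expect to be the main obstacle is exactly this last multiplicativity: one must verify that $T_\theta$-invariance is genuinely a factor-by-factor condition, both across the three CRT components and within the recursive construction of each self-dual (respectively Hermitian self-dual) code, so that the $\rho_{G_i}$ and $\rho_{H_i'}$ produced here coincide with the quantities defined before the statement, and in particular that the two $\mathbb{F}_q$-components contribute a common $\rho_{G_i}$, which is what justifies the square. A secondary point, and the only place where the hypothesis that $-1$ is a non-square is essential, is keeping $Y^2+1$ irreducible so that the third component supplies the Hermitian factor $(q+1)\prod(q^{2i+1}+1)$ rather than splitting into a reciprocal pair.
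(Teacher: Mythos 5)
Your proof follows exactly the route the paper takes: the paper's entire argument for this proposition is the single blanket sentence that one applies \cite[Propositions 6.12 and 6.13]{ling} together with the fact that $\rho$ counts the codes not invariant under $T_\theta$. Your write-up is correct by the paper's own standard and in fact supplies strictly more detail than the paper does (the factorization $Y^4-1=(Y-1)(Y+1)(Y^2+1)$ with $Y^2+1$ irreducible, the identification of the two Euclidean and one Hermitian components, and the explicit flagging of the componentwise-invariance step that the paper leaves implicit).
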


\begin{proposition}
Let $d$ be an even integer and $q$ be an odd prime power such that $-1$
is a square in $\mathbb{F}_q.$ Then, the number of distinct Euclidean
self-dual $\theta$-cyclic codes of length $4d$ over $\mathbb{F}_q$ is
\[
\displaystyle{4\left(\prod_{i=1}^{\frac{d}{2}-1} (q^i+1-\rho_{G_i})^2\right)(N(d,q)-\rho_{H',H''})}.
\]
\end{proposition}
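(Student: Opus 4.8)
The plan is to derive the count from the Ling--Sol\'e enumeration of Euclidean self-dual quasi-cyclic codes and then restrict to the $\theta$-cyclic ones by means of the map $\rho$. Since we are in Case 2 with $(n,|\theta|)=d$ and $n=4d$, Proposition~\ref{equiv} turns every $\theta$-cyclic code of length $4d$ into a quasi-cyclic code of index $d$ over $R=\mathbb{F}_q[Y]/(Y^4-1)$; conversely, by Lemma~\ref{same} and the remark following it, a Euclidean self-dual quasi-cyclic code $C$ is $\theta$-cyclic precisely when $T_\theta(C)\subseteq C$. Thus the quantity to be computed is the number of Euclidean self-dual quasi-cyclic codes of index $d$ satisfying $T_\theta(C)\subseteq C$.

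First I would fix the factorization that drives the decomposition. As $q$ is odd and $-1=i^2$ is a square in $\mathbb{F}_q$, we have $Y^4-1=(Y-1)(Y+1)(Y-i)(Y+i)$, in which $Y-1$ and $Y+1$ are self-reciprocal while $\{Y-i,\,Y+i\}$ is a reciprocal pair; hence $s=2$, $t=1$, and $G_1\cong G_2\cong H_1'\cong H_1''\cong\mathbb{F}_q$. Under the Ling--Sol\'e isomorphism the code splits as $C=C_1\oplus C_2\oplus(C_1'\oplus C_1'')$. On each self-reciprocal block the induced form is Euclidean, so $C_1$ and $C_2$ run over the Euclidean self-dual codes of length $d$ over $\mathbb{F}_q$; since $d$ is even and $-1$ is a square, each block contributes $2\prod_{i=1}^{d/2-1}(q^i+1)$ such codes. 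On the reciprocal pair the duality forces $C_1''$ to be the dual of $C_1'$, so the pair is determined by an arbitrary subspace $C_1'\subseteq\mathbb{F}_q^d$, giving $N(d,q)$ possibilities. Multiplying the three contributions reproduces exactly the count $4\left(\prod_{i=1}^{d/2-1}(q^i+1)\right)^2N(d,q)$ of \cite[Proposition 6.13]{ling}.

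It remains to impose $T_\theta(C)\subseteq C$. I would check that $T_\theta$ is compatible with the decomposition above: the shift acts as multiplication by $Y$, while $\theta$ is a coordinatewise Frobenius permuting the roots $1,-1,i,-i$, so it fixes the self-reciprocal blocks and at worst interchanges $Y-i$ with $Y+i$. Consequently $T_\theta(C)\subseteq C$ holds if and only if the inclusion holds blockwise, and $\rho$ is by construction the indicator of the failure of this inclusion, with $\rho_{G_i}$ and $\rho_{H',H''}$ recording how many of the admissible codes on the respective blocks are not invariant. Subtracting these from the corresponding local counts and multiplying gives the asserted value $4\left(\prod_{i=1}^{d/2-1}(q^i+1-\rho_{G_i})^2\right)(N(d,q)-\rho_{H',H''})$.

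The step I expect to be the main obstacle is the last one, namely showing that the constraint $T_\theta(C)\subseteq C$ decouples cleanly across the Ling--Sol\'e blocks so that the corrections enter multiplicatively, and reconciling the factorwise subtraction of $\rho_{G_i}$ with the mass-formula shape $2\prod_{i=1}^{d/2-1}(q^i+1)$ of the self-dual count on a single self-reciprocal block. One must verify in particular that $\theta$ never mixes a self-reciprocal block with the reciprocal pair, and that when $\theta$ swaps $Y-i$ and $Y+i$ the induced identification is compatible with the pairing $C_1''=(C_1')^\perp$; only then is the number of non-invariant self-dual codes additive over blocks and the subtraction of $\rho_{G_i}$ and $\rho_{H',H''}$ justified.
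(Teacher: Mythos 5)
Your proposal follows essentially the same route as the paper: the paper's entire justification for this proposition is an appeal to \cite[Proposition 6.13]{ling} together with the observation that $\rho$ counts the codes not invariant under $T_\theta$, which is exactly your strategy, though you additionally make explicit the factorization $Y^4-1=(Y-1)(Y+1)(Y-i)(Y+i)$, the block decomposition, and the local counts. The obstacle you flag at the end --- that the invariance constraint must be shown to decouple across the Ling--Sol\'e blocks so that the corrections $\rho_{G_i}$ and $\rho_{H',H''}$ enter multiplicatively, and that the subscript on $\rho_{G_i}$ must be reconciled with the mass-formula indexing $2\prod_{i=1}^{d/2-1}(q^i+1)$ on a single self-reciprocal block --- is a genuine issue, and the paper leaves it entirely unaddressed.
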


\section{Conclusion}
Enumeration of skew-cyclic or $\theta$-cyclic codes over finite fields has been considered by
Boucher and her coauthors in \cite{delphine13}, \cite{boucher14}.  However, the enumeration of this class
of codes has not yet been done completely. In this paper,
we study the enumeration of self-dual skew-cyclic or $\theta$-cyclic
codes if $(n,|\theta|)=1$ and for some cases if $(n,|\theta|)=d>1,$
where $|\theta|$ is the order of the automorphism $\theta.$

\section{Acknowledgment}
I, I.M-A, and A.B. are supported in part by Riset Unggulan Perguruan Tinggi Dikti 2016.
D.S. is supported in part by Riset ITB 2016.

% ------------------------------------------------------------------------
\end{document}